\title{Uniqueness of MHV gravity amplitudes}
\keywords{}
\author{J. Koefler}
\address{%
Non-linear algebra\\
Max-Planck Institute\\
\email{joris.koefler@mis.mpg.de}
}
\author{U. Oktem}
\address{%
Center for Quantum Mathematics and Physics\\
University of California, Davis\\
\email{ucoktem@ucdavis.edu}
}
\author{S. Paranjape}
\address{%
Center for Quantum Mathematics and Physics\\
University of California, Davis\\
and\\
Department of Physics, Brown University\\
\email{shruti\_paranjape@brown.edu}
}
\author{J. Trnka}
\address{%
Center for Quantum Mathematics and Physics\\
University of California, Davis\\
\email{trnka@ucdavis.edu}
}
\author{B. Zacovic}
\address{%
Department of Mathematics\\
University of Michigan, Ann Arbor\\
\email{bzacovic@umich.edu}
}
\date{2024/10/15}
\DeclareMathOperator\gr{Gr}
\DeclareMathOperator\Z{\mathbb{Z}}
\DeclareMathOperator\C{\mathbb{C}}
\newcommand{\ang}[1]{\left< #1\right>}
\newcommand{\sq}[1]{\left[ #1\right]}
\begin{document}

\maketitle

\begin{abstract}
\noindent We investigate MHV tree-level gravity amplitudes as defined on the spinor- helicity variety. Unlike their gluon counterparts, the gravity amplitudes do not have logarithmic singularities and do not admit Amplituhedron-like construction. Importantly, they are not determined just by their singularities, but rather their numerators have interesting zeroes. We make a conjecture about the uniqueness of the numerator and explore this feature from a more mathematical perspective. This leads us to a new approach for examining adjoints. We outline steps of our proposed proof and provide computational evidence for its validity in specific cases. 

\end{abstract}

\section{Introduction}\label{sec:intro}

Scattering amplitudes are mathematical functions that describe probabilities of elementary particle interactions. In the textbook formulation of quantum field theory, they are calculated as a sum of Feynman diagrams which provide a diagrammatic method to organize perturbative calculations. While this approach is general and can be used to calculate amplitudes in any quantum field theory, the efficiency of this method is problematically low for scattering of particles with spin. This is due to the number of Feynman diagrams, representing complicated rational functions, growing extremely fast with increasing multiplicity. In the last few decades, it has become clear that Feynman's picture often hides many surprising properties: extraordinary simplicity, hidden symmetries and fascinating connections to mathematical structures. 

In the context of gluon interactions, the simplest tree-level Maximal-\break Helicity-Violating (MHV) amplitude for an $n-$particle interaction is given by the famous Parke-Taylor formula \cite{Parke:1986gb},
\begin{equation} \label{PT}
    {\rm PT}_n = \frac{1}{\langle 12\rangle\langle 23\rangle\langle 34\rangle\dots\langle n1\rangle}\,.
\end{equation}
We can associate this function with the canonical form on the non-negative Grassmannian $\gr_{\geq}(2,n)$, which is the subset of the real Grassmannian $\gr(2,n)$ where all ordered maximal minors are positive \cite{Arkani-Hamed:2009ljj}. Generalizing this leads to a connection between cells in $\gr_{\geq}(k,n)$, plabic graphs and on-shell diagrams \cite{Arkani-Hamed:2012zlh} which are terms in the BCFW recursion relations for scattering amplitudes \cite{Britto:2005fq}. This further generalizes to all tree-level amplitudes and loop integrands in planar ${\cal N}=4$ super Yang-Mills (SYM) theory in the context of the Amplituhedron \cite{Arkani-Hamed:2013jha}. However, no such picture is known for amplitudes of gravitons. While the color-kinematics duality \cite{Bern:2008qj} suggests a connection between graviton and gluon scattering, the geometric picture for graviton amplitudes and connections to mathematics have not been found yet, despite some promising avenues \cite{Trnka:2020dxl,Paranjape:2023qsq}.

In this paper, we focus on the simplest MHV tree-level graviton amplitudes. There are several representations of this amplitude in the literature \cite{Berends:1988zp,Mason:2009afn,Bern:1998sv,Nguyen:2009jk}, but one of particular interest to us is \emph{Hodges formula}, Equation \eqref{eq:Hodges}, which first appeared in \cite{Hodges:2012ym}. Let $n\geq 5$ denote the number of particles involved in the scattering process, and consider the symmetric $n\times n$ matrix $(\Phi_{ij})$ for $1\leq i<j\leq n$ with entries given by 
\begin{align}\label{eq:phi_hodges}
\Phi_{ij}=\begin{cases}
    \frac{\sq{ij}}{\ang{ij}}, & i\neq j \\
    -\sum_{k\in\{1,\ldots,n\}\setminus i}\frac{\sq{ik}\ang{xk}\ang{yk}}{\ang{ik}\ang{xi}\ang{yi}}, & i=j
\end{cases},
\end{align}
where the $\ang{ij}$, $\sq{ij}$ are variables called \emph{spinors}, and $x,y\in \{1,\ldots,n\}$ are referred to as \emph{reference spinors} (see Section \ref{sec:math} for details).
Note that the spinors are antisymmetric with respect to the labels used, that is $\ang{ij}=-\ang{ji}$. 
Denote $\Phi^{\{i_1,\dots,i_k\}}_{\{j_1,\dots,j_m\}}$ the matrix obtained from $\Phi$ by deleting the rows $i_1,\dots,i_k$ and columns $j_1,\dots,j_m.$
Then, the \emph{${\rm MHV}$ gravity amplitude} $A_n$ (with the helicity factor stripped-off) is defined as the rational function
\begin{equation}\label{eq:Hodges}
A_n = \frac{\det\Phi^\mathcal{R}_\mathcal{C}}{(\mathcal{R})(\mathcal{C})},
\end{equation}
where $\mathcal{R}=\{a<b<c\}$, $\mathcal{C}=\{d<e<f\}$ are subsets of $\{1,\ldots, n\}$, with $(\mathcal{R})$ and $(\mathcal{C})$ equal $-\ang{ab}\ang{bc}\ang{ac}
$ and $-\sq{de}\sq{ef}\sq{df}$, respectively. We will later, in Section \ref{sec:math}, show that for any choice of reference spinors $x,y$, deleted rows $\mathcal{R}$ and columns $\mathcal{C}$, the numerators $N_n$ of the rational function $A_n\cdot\prod_{1\leq i<j\leq n}\ang{ij}$ are related to each other by momentum conservation and Pl\"ucker relations in the spinors. This is a re-derivation of \cite{Hodges:2012ym} in our new setting.  
Due to its physical interpretations as a scattering amplitude, it is also known that $N_n$ vanishes whenever $\ang{ij}=\sq{ij}=0$ for any $1\leq i<j\leq n$. We give a self-contained proof of this property later on.

\begin{exa}\label{ex:n=5_numerator}
    Let $n=5$, $\mathcal{R}=\{1,2,3\},$ and $\mathcal{C}=\{3,4,5\}.$ Then, 
    \begin{align*}
        \det\Phi^{\mathcal{R}}_{\mathcal{C}}=\det
        \begin{bmatrix}
            \frac{\sq{14}}{\ang{14}} & \frac{\sq{24}}{\ang{24}} \\
            \frac{\sq{15}}{\ang{15}} & \frac{\sq{25}}{\ang{25}}
        \end{bmatrix}
    \end{align*}    
    and thus 
    \begin{align}\label{eq:ex_N_5}
        N_5=\ang{15}\ang{24}\sq{14}\sq{25}-\ang{14}\ang{25}\sq{15}\sq{24}.
    \end{align}
    Notice that the expression does not contain the reference spinors $x,y$. It is also clear that $N_5$ vanishes whenever we set $\ang{ij}=\sq{ij}=0$ for any $(i,j)$ in $\{(1,5),(2,4),(1,4),(2,5)\}$. To make the remaining zeros manifest, we may apply momentum conservation relations as described in Equation \eqref{eq:momconserv}. For example, using the relations
    \[
    \ang{24}\sq{25}=-\ang{14}\sq{15}-\ang{34}\sq{35}\text{  and  }
    \ang{25}\sq{24}=-\ang{15}\sq{14}-\ang{35}\sq{34},
    \]
    we can make the vanishing of $N_n$ at $\ang{34}=\sq{34}=0$ and $\ang{35}=\sq{35}=0$ manifest, as shown by
    \begin{align*}
        N_5 &= \ang{15}\ang{24}\sq{14}\sq{25}-\ang{14}\ang{25}\sq{15}\sq{24} \\
        &= \ang{15}\sq{14}\left(-\ang{14}\sq{15}-\ang{34}\sq{35}\right)-\ang{14}\sq{15}\left(-\ang{15}\sq{14}-\ang{35}\sq{34}\right)\\
        &=\ang{14}\ang{35}\sq{15}\sq{34}-\ang{15}\ang{34}\sq{14}\sq{35}.
    \end{align*}
    Similar substitutions can be made to manifest the remaining zeros. 
    
\end{exa}

\noindent Motivated by the example above we make the following conjecture, which is the main content of our paper:

\begin{conj}\label{conj:main}
The $n$-point MHV gravity amplitude can be written in the form
\begin{equation}
    A_n = \frac{N_n}{\prod_{1\leq i<j\leq n}\langle ij\rangle}
\end{equation}
where the numerator $N_n$ is a polynomial in the brackets $\langle ..\rangle$, $[..]$. In fact, $N_n$ is the {\bf unique polynomial} (up to an overall factor) of bi-degree $(\frac{n^2-3n-6}{2},n-3)$ in $\langle..\rangle$ and $[..]$, that vanishes for any $i<j\in\{1,\ldots,n\}$ if we send a pair of spinor brackets to zero,
\begin{equation}
    \langle ij\rangle = [ij] = 0.
\end{equation}
We will refer to Section \ref{sec:math} for a mathematical concise version of this conjecture.
\end{conj}

From a certain perspective, Hodges formula and our uniqueness conjecture is a generalization of the Parke-Taylor factor (\ref{PT}) for gluon amplitudes to gravitons. In the case of gluons, the MHV tree-level amplitude is fixed by its poles and requirement of logarithmic singularities, and the numerator is trivial. The generalization to arbitrary gluon amplitudes leads us to the positive Grassmannian
and the Amplituhedron. In this framework, all information is contained in the boundary structure, reflected in the poles and singularities of the corresponding canonical differential form. 
It has been known for a long time that gravity amplitudes do not follow this pattern and the singularity structure is more complicated, and yet the amplitude formulas are often remarkably simple \cite{Herrmann:2018dja,Bourjaily:2018omh,Edison:2019ovj,Brown:2022wqr,Belayneh:2024lzq,Cachazo:2024mdn,Bourjaily:2023ycy,Bourjaily:2023uln}. 
Unlike for gluons, the explicit expressions have non-trivial numerators and are not fully fixed only by the locations of their poles. From the physics perspective, this corresponds to ``poles at infinity''. 
Our conjecture suggests that the behavior at infinity is also linked to the strong constraints on the IR region, along the lines of \cite{Cachazo:2005ca,Bedford:2005yy,Herrmann:2018dja,Cachazo:2024mdn}, possibly opening new avenues in the search of the geometric picture for graviton amplitudes.

Our work contributes meaningful computational and theoretical advances that pave the way toward proving this conjecture. The structure of this work is as follows. In Section \ref{sec:math}, we introduce the mathematical framework in which this problem is set and reformulate our main conjecture. Section \ref{sec:steps_proof} shows the steps we believe to be most promising for furnishing a proof of our conjecture in full generality.
Finally, Section \ref{sec:computational_results} then provides explicit computational proofs for small $n$, that is $n=5$ and $n=6$. We conclude with a brief discussion of the shortcomings of our computational methodology for $n\geq 7.$

\section{Spinor-Helcity ideals and little group weights}\label{sec:math}

The goal of this section is to lay out the mathematical framework in which we can formulate Conjecture \ref{conj:main} more succinctly. We do so by expanding on the setup of spinor-helicity varieties given in \cite[Section 2]{spinor-helcity}.
First, let us fix two copies of the complex Grassmannian $\gr(2,n)$, with Pl\"ucker variables $\ang{ij}$ and $\sq{ij}$ for all $i<j\in \{1,\ldots,n\}$, respectively.
We refer to these variables as angle and square \emph{spinors}.
Let $R_n$ denote the polynomial ring in these spinors over $\mathbb{C}$. In this ring we have $n^2$ \emph{momentum conservation} relations, given as
\begin{equation}\label{eq:momconserv}
\sum_{i\in\{1,\ldots,n\}\setminus\{a,b\}}\langle ai\rangle[ib]=0,
\end{equation}
for all $a,b\in \{1,\ldots,n\}$.
Then, denote by $I_n\subset R_n$ the ideal generated by these relations and the quadratic Pl\"ucker relations in the angle and square spinors. We can then identify $I_n$ with the spinor-helicty ideal $I_{n,k,r}$ from \cite[Remark 2.6]{spinor-helcity}, where $k=2$ and $r=0$, defining the corresponding spinor-helicity variety $\operatorname{SH}(2,n,0)$. We will also denote the coordinate ring $\mathbb{C}[\operatorname{SH}(2,n,0)]=R_n/I_n$ by $Q_n$, and the canonical projection $R_n\rightarrow R_n/I_n$, sending elements in $R_n$ to their equivalence class by $\pi_{I_n}$.
In this work, however, we are interested in the following intersection with $\operatorname{SH}(2,n,0)$: Let $J_{ij}\subset R_n$ be the monomial ideal generated by $\langle ij\rangle$ and $[ij]$ for all $i< j\in \{1,\ldots,n\}$, then we want to study a homogeneous part of
\[
J = \bigcap_{1\leq i< j\leq n} \pi_{I_n}(J_{ij})\subset Q_n.
\]
The homogeneous part we are interested in is dictated by a property of the spinor variables that is called \emph{little group weight}, which we are going to introduce next. To that end, we need to understand the spinor variables better. Spinor variables were introduced to make manifest the on-shell condition for massless particles, see \cite{scatteringamplitudes} for details.
More precisely, for a particle labeled by $i$ with complex momentum $p_i$ in $3{+}1$ kinematic space, we want a set of variables which trivialize $
p_i^2=0$ where $p_i^2$ denotes the inner product on $\C^4$.
We can encode each $p_i$ in a complex $2\times 2$ matrix $P_i$. Then, the condition $P_i^2=0$ is satisfied if and only if that matrix is rank deficient, i.e. we can write $\lambda_i\widetilde{\lambda_i}=P_i$ for some suitable $\lambda_i\in\C^{2}$ and $\widetilde{\lambda_i}\in(\C^{2})^\vee$.
The spinor variables $\ang{ij}$ and $\sq{ij}$ are the determinants of the $2\times 2$ matrices given by $\lambda_i\lambda_j$ and $\widetilde{\lambda_i}\widetilde{\lambda_j}$, respectively.
Upon rescaling $\lambda_i$ by a non-zero complex parameter $t_i$, while simultaneously rescaling $\widetilde{\lambda_i}$ by $t_i^{-1}$, the matrix $\lambda_i\widetilde{\lambda_i}=P_i$ remains invariant. This is called \emph{little group scaling} invariance. The MHV numerator $N_n$ is then a function of the spinors that transforms homogeneously under this little group scaling. That is, $N_n(\lambda_1,\ldots,t_i\lambda_i,\ldots,\lambda_n)=t_i^rN_n(\lambda_1,\ldots,\lambda_i,\ldots,\lambda_n)$ for all $i\in\{1,\ldots, n\}$, where $r$ depends on the number $n$ of particles involved in the process. This gives rise to a multi-grading on $R_n$.

For our purposes, a $\mathbb{Z}^m$-graded polynomial ring, is a ring $S=
\C[x_1,\dots,x_n]$ that decomposes as a direct sum of additive groups 
\[
    S=\bigoplus_{v\in \mathbb{Z}^m} S_{v}
\]
such that we get an inclusion
\[
    S_{v}S_{w}\subseteq S_{v+w},
\]
for all $v,w\in \Z^m$.
Therein, the homogeneous parts are denoted by $S_v$, which are the subsets of $S$ consisting of elements of degree $v\in\Z^m$. The degree is induced by a \textit{degree map} $\text{deg}:\mathbb{Z}_{\geq 0}^n\rightarrow \mathbb{Z}^m$, which, for each $u=(u_1,\dots,u_n)$ corresponding to the monomial $x_1^{u_1}x_2^{u_2}\cdots x_n^{u_n}$, assigns $\text{deg}(u)=v\in\Z^m.$
More specifically, in this work, the degree map is given by
\begin{align}\label{eq:LGW}
    w:\mathbb{Z}^{2\binom{n}{2}}_{\geq 0}\rightarrow\Z^{2{+}n},\quad \ang{ij}\mapsto e_{\ang{}}+e_i+e_j,\quad\sq{ij}\mapsto e_{\sq{}}-e_i-e_j,
\end{align}
where $(e_{\ang{}},e_{\sq{}},e_1,\ldots, e_n)$ denotes the ordered standard basis of $\Z^m$ and we implicitly used the identification of the exponent vector $u$ with its corresponding element in $R_n$. 
For example, when $n=5,$ we write
$$w(\ang{13}\ang{24}\sq{45})=(2,1,1,1,1,0,-1)\in \mathbb{Z}^7.$$
In other words, the grading keeps track of the number of appearances of each label $1,\ldots,n$ in $\ang{..}$ or $\sq{..}$, and the numbers of spinor brackets.
This degree map $w$ gives rise to a $\Z^{2{+}n}$ grading on $R_n$, which refines the natural bi-grading with respect to $\ang{..}$ and $\sq{..}$.
Moreover, this grading can be represented as a grading by a $(n{+}2)\times 2\binom{n}{2}$ matrix with non-negative integer entries. This ensures that for all $\alpha\in\Z^{2+n}$ the homogeneous parts $(R_n)_\alpha$ are of finite dimension as $\C$-vector spaces, see e.g. \cite[Section 8.1]{comb_CA}. We say an ideal $J\subset R_n$ is homogeneous if for every element $f\in J$ all of its homogeneous parts are in $J$. For $\alpha\in \Z^{2+n}$ we denote by $(J)_\alpha$ the intersection of $J$ with $(R_n)_\alpha$.
Then, clearly $I_n$ is a homogeneous ideal in $R_n$ with this grading, therefore the quotient $R_n/I_n$ inherits the same grading. Finally, let the \emph{total degree} of $f\in R_n$ be defined as sum of the first two coordinates of $w(f)$; in Physics this is referred to as the \emph{mass dimension} of $f$.
As the numerator $N_n$ of the MHV gravity amplitude $A_n$ is known to carry particular mass dimension and little group weight in all labels $1\leq i\leq n$, we can now encode these physical parameters in the following multidegree:
\[
d(n):=\left(\frac{1}{2}(n^2{-}3n{-}6),n{-}3,n{-}5,\ldots,n{-}5\right)\in \mathbb{Z}^{2+n}.
\]
That is to say, $N_n$ will be a homogeneous polynomial of degree $d(n)\in \mathbb{Z}^{2+n}.$ This formalism allows us to reformulate Conjecture \ref{conj:main}. 

\begin{conj}\label{conj:numerator_unique}
    Fix $n\geq 5$, let $I_n$, $R_n$ and $J$ as above. Then, the $\C$-vector space $J_{d(n)}$ is generated by a unique element, i.e.
    $\dim_{\mathbb{C}}(J_{d(n)})=1$.
\end{conj}

There are a few comments in order.

\begin{rem}\textcolor{white}{.}
\begin{itemize}
    \item 
    In order for this unique generator to be compatible with its physical interpretation as a scattering amplitude, a priori, it would seem appropriate to place an additional constraint on this basis element; namely requiring it to be anti-symmetric under the exchange of any two labels $1\leq i<j\leq n$. However, as we will see this requirement is redundant.
    \item We surmise that the conjecture holds true when coarsening the grading to the natural $\mathbb{Z}^2$-grading which records only the numbers of spinor brackets. We chose to maintain the finer $\mathbb{Z}^{2+n}$-grading, however, in order to expedite the computations carried out in Section \ref{sec:computational_results}.
    \item  
    We can see that Conjecture \ref{conj:main}, is related to Conjecture \ref{conj:numerator_unique} by the correspondence theorem, as the former amounts to saying that $N_n$ is the lowest total degree generator of the ideal $\bigcap_{1\leq i<j\leq n}(J_{ij}+I_n)$ in $R_n$.
    Remarkably, a similar conjecture can be made for the adjoint of polytopes, which is the numerator of their canonical form when considered as positive geometries. Therefore, Conjecture \ref{conj:numerator_unique} constitutes a novel approach to studying the structure of the adjoint hypersurface.
    We also believe that both formulations are equivalent descriptions of the problem, as suggested by our computational results in Section \ref{sec:computational_results}. Since we have both of these formulations of our conjecture, in an effort to streamline notation, we will henceforth refer to the equivalence classes in $Q_n$ as polynomials.
\end{itemize}
\end{rem}

Naturally, we want that $N_n$ is a contender for the unique generator in Conjecture \ref{conj:numerator_unique}. The following Proposition asserts this.

\begin{prop}\label{prop:N_n_basis_element}
    The numerator $N_n$ of the MHV gravity amplitude $A_n$ is a basis element of $J_{d(n)}\subset Q_n$.
\end{prop}

Proving this claim amounts to showing that $N_n$ has a well defined equivalence class in the quotient $Q_n=R_n/I_n$, which is independent of the choices of reference spinors $x,y$ in Equation \eqref{eq:Hodges} and of the sets $\mathcal{R},\mathcal{C}$; and also that $N_n$ is contained in $J_{d(n)}$. We start with a partial result for the former.

\begin{lemma}\label{lem:ref_spinors}
    For any choice of  $x,y\in\{1,\ldots, n\}$ the numerator of $\Phi_{ii}$, as in \eqref{eq:phi_hodges}, has a well defined equivalence class in $Q_n$.
\end{lemma}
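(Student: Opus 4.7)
The plan is to show that $\Phi_{ii}^{(x,y)}$ depends on the reference spinors $x,y$ only up to elements of the ideal $I_n$, so that its numerator with respect to any fixed common denominator defines a well-posed class in $Q_n$. Since \eqref{eq:phi_hodges} is symmetric in $x\leftrightarrow y$, it suffices to prove invariance under changing $x$ alone: fixing $y$ and varying $x\mapsto x'$, I would write
\begin{align*}
\Phi_{ii}^{(x,y)}-\Phi_{ii}^{(x',y)}=-\sum_{k\neq i}\frac{\sq{ik}\ang{yk}}{\ang{ik}\ang{yi}}\left(\frac{\ang{xk}}{\ang{xi}}-\frac{\ang{x'k}}{\ang{x'i}}\right).
\end{align*}

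The key step is to apply the three-term Pl\"ucker relation in $\gr(2,n)$, namely $\ang{xi}\ang{x'k}-\ang{xk}\ang{x'i}=\ang{xx'}\ang{ik}$, which rewrites the bracketed term as $-\ang{xx'}\ang{ik}/(\ang{xi}\ang{x'i})$. Substituting this in, the factor $\ang{ik}$ in the numerator cancels the one in the denominator, and one obtains
\begin{align*}
\Phi_{ii}^{(x,y)}-\Phi_{ii}^{(x',y)}=\frac{\ang{xx'}}{\ang{xi}\ang{x'i}\ang{yi}}\sum_{k\neq i}\sq{ik}\ang{yk}.
\end{align*}
Since $\sq{ii}=0$, the remaining sum is equal to $\sum_{k=1}^n\sq{ik}\ang{yk}$, which is precisely (up to sign) the momentum conservation generator in \eqref{eq:momconserv} indexed by $a=y, b=i$. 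Hence the difference vanishes modulo $I_n$ once the angle-spinor denominators are cleared. Applying the same reasoning in the $y$ slot then establishes invariance of $\Phi_{ii}^{(x,y)}$ under arbitrary changes of both reference spinors.

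To pass from this equality of rational functions to the intended statement about numerators, I would multiply through by a common denominator of the form $\ang{xi}\ang{yi}\ang{x'i}\ang{y'i}\prod_{k\neq i}\ang{ik}$ and read off the resulting polynomial identity in $R_n$ modulo $I_n$. I expect the main obstacle to be precisely this bookkeeping step: clearing denominators inside the quotient requires checking that the angle spinors $\ang{ij}$ are non-zero-divisors in $Q_n$, i.e., that $I_n$ is prime, or at least that no $\ang{ij}$ lies in any associated prime of $I_n$. This irreducibility of the spinor-helicity variety is the one structural input beyond the purely combinatorial Pl\"ucker and momentum conservation identities used above; otherwise the argument is a direct substitution.
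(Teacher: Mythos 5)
Your computation is correct and reaches the same two ingredients as the paper (the three-term Pl\"ucker relation and the momentum conservation relation with $a=y$, $b=i$), but it organizes them differently. The paper fixes $i=1$, $x=n{-}2$, $y=n{-}1$, $z=n$ and works directly at the level of the numerator polynomials: it singles out the $k=n{-}1$ summand of one side, expands it via momentum conservation, then applies the Pl\"ucker relation term-by-term to the other side and exhibits an explicit cancellation (Equation \eqref{eq:example_term} against the expanded terms), concluding by analogy and a chain of transpositions. You instead stay at the level of rational functions, form the difference $\Phi_{ii}^{(x,y)}-\Phi_{ii}^{(x',y)}$, telescope it with a single Pl\"ucker identity, and land directly on a momentum conservation generator. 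Your version is cleaner and more uniform in $i,x,y$, while the paper's version has the advantage of never leaving the polynomial ring and thus never needing to reason about fractions over $Q_n$. The obstacle you flag at the end is real and is exactly the price of your route: to pass from the vanishing of the rational function to the identification of numerators modulo $I_n$, you need $Q_n$ to be a domain and the angle brackets $\ang{ij}$ to be non-zero-divisors there. Both facts are available in this paper: Remark \ref{rem:SH_as_flag} identifies $\operatorname{SH}(2,n,0)$ with the two-step flag variety $\operatorname{Fl}(2,n{-}2;\C^n)$, which is irreducible, and no coordinate hyperplane $\{\ang{ij}=0\}$ contains it, so the $\ang{ij}$ are indeed non-zero-divisors in $Q_n$. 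With that input supplied, your argument closes; the one bookkeeping point worth being careful about is that the common denominator of $\Phi_{ii}$, after discarding the summands $k\in\{x,y\}$ that vanish identically, is $\prod_{k\neq i}\ang{ik}$ and does not depend on the choice of $x,y$, so equality of rational functions does give equality of the numerators and not merely a cross-multiplied relation.
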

\begin{proof}
    For a choice of $x$ and $y$, we write $\Phi_{ii}(x,y)$ to emphasize the dependence of $\Phi_{ii}$ on $x$ and $y$.
    We start by fixing $i=1$, $x=n{-}2$, $y=n{-}1$, and $z=n$. So we need to show that the numerators of $\Phi_{11}(n{-}2,n{-}1)$ and $\Phi_{11}(n{-}2,n)$ are equal in $Q_n$. More precisely, 
    \begin{align}\label{eq:ref_spinors}
        \sum_{k=2}^{n}\sq{1k}\ang{n{-}2k}\ang{n{-}1k}\prod_{j\neq k}\ang{1j}
        =\sum_{k=2}^{n}\sq{1k}\ang{n{-}2k}\ang{nk}\prod_{j\neq k}\ang{1j}
        \in Q_n.
    \end{align}
    First, notice that the summand of the right-hand-side for $k=n{-}1$ is given as 
    \begin{align*}
    &\sq{1n{-}1}\ang{n{-}2n{-}1}\ang{nn{-}1}\prod_{j\neq n{-}1}\ang{1j}\\
    =&
    \ang{nn{-}1}\prod_{j\neq n{-}1}\ang{1j}\left(\sum_{\ell\in\{2,\ldots,n\}\setminus\{n{-}2,n{-}1\}}\sq{1\ell}\ang{\ell n{-}2}\right),
    \end{align*}
    where we used momentum conservation.
    For any fixed $\ell\neq 1,n{-}2,n{-}1$ these summands are 
    \begin{align}\label{eq:example_term}
    \ang{nn{-}1}\prod_{j\neq n{-}1}\ang{1j}\sq{1\ell}\ang{\ell n{-}2}
    =-
    \ang{1\ell}\sq{1\ell}\ang{\ell n{-}2}\ang{n{-}1n}\prod_{j\neq n{-}1,\ell}\ang{1j}.
    \end{align}
    On the other hand, we can use the Pl\"ucker relations for the left-hand-side in Equation \eqref{eq:ref_spinors}, such that for each $k \neq n$ we obtain
    \begin{align*}
    &\quad\prod_{j\neq k}\ang{1j}\sq{1k}\ang{n{-}2k}\ang{n{-}1k}\\
    &=
        -\prod_{j\neq k,n{-}1}\ang{1j}\sq{1k}\ang{n{-}2k}\left(\ang{1k}\ang{n{-}1n}-\ang{1n{-}1}\ang{kn}\right)\\
    &=
        \ang{1k}\sq{1k}\ang{kn{-}2}\ang{n{-}1n}\prod_{j\neq k,n{-}1}\ang{1j}
        -
        \sq{1k}\ang{n{-}2k}\ang{nk}\prod_{j\neq k}\ang{1j}.
    \end{align*}
    Upon inspection, we can see that the first term cancels the term in Equation \eqref{eq:example_term} and the second one cancels a single term in Equation \eqref{eq:ref_spinors} for any $\ell=k\neq n$. Thus, the claim follows for this choice of $x,y$ and $z$. The proof for any other choice is analogous. The overall claim then follows by making two consecutive swaps of reference spinors. 
\end{proof}

Next, we make the observation about the vanishing of $N_n$, when restricting to $\ang{ij}=\sq{ij}=0$ for any $i<j\in\{1,\ldots, n\}$, as alluded to in Example \ref{ex:n=5_numerator}, more rigorous. 

\begin{lemma}\label{lemma:zeros_hodge}
    Fix $x\neq y\in\{1,\ldots,n\}$.
    Then, for all $i<j\in\{1,\ldots,n\}$ such that $\{i,j\}\cap\{x,y\}=\emptyset,$ the numerator $N_n$ of the MHV amplitude $A_n$ equals 
    \[
    N_n=\ang{ij}r_1+\sq{ij}r_2,
    \]
    where $r_1,r_2\in R_n$.
\end{lemma}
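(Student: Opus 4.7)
The plan is to establish the claim by showing that the class of $N_n$ in $\overline{Q}_n := Q_n/(\ang{ij},\sq{ij})$ is zero, which is equivalent to the desired factorization $N_n = \ang{ij}r_1 + \sq{ij}r_2$ in $Q_n$. The core of the argument is a rank-deficiency property of Hodges's matrix $\Phi$: while $\Phi$ generically has rank $n-3$, an additional rank is lost modulo $(\ang{ij},\sq{ij}) + I_n$, which forces every $(n-3)\times(n-3)$ minor of $\Phi$ -- and in particular $N_n$ -- to vanish.

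I would begin by noting that modulo $(\ang{ij})+I_n$, the three-term Pl\"ucker relation $\ang{ij}\ang{kl} = \ang{ik}\ang{jl} - \ang{il}\ang{jk}$ collapses to $\ang{ik}\ang{jl} \equiv \ang{il}\ang{jk}$, which in the appropriate fraction field means $\ang{jk} = c\,\ang{ik}$ for some scalar $c$ independent of $k$. The analogous statement modulo $\sq{ij}$ gives $\sq{jk} = \widetilde{c}\,\sq{ik}$. The hypothesis $\{i,j\}\cap\{x,y\}=\emptyset$ ensures these proportionalities also apply to the reference indices, e.g.\ $\ang{xj} = c\,\ang{xi}$ and $\ang{yj} = c\,\ang{yi}$.

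Next, recall that $\Phi$ has generic rank at most $n-3$, since for any pair of indices, the vector $v^{(x,y)}_k := \ang{kx}\ang{ky}$ lies in the kernel of $\Phi$, and as $(x,y)$ varies these vectors span a three-dimensional subspace. In the fraction field of $\overline{Q}_n$, I claim that the vector $w := e_j - (\widetilde{c}/c)\,e_i$ is an additional kernel element of the $(n-2)\times n$ submatrix of $\Phi$ obtained by deleting rows $i$ and $j$: indeed, for each remaining row index $s\neq i,j$,
\[
\Phi_{sj} - \frac{\widetilde{c}}{c}\,\Phi_{si} \;=\; \frac{\sq{sj}}{\ang{sj}} - \frac{\widetilde{c}}{c}\,\frac{\sq{si}}{\ang{si}} \;=\; 0.
\]
A quick genericity check shows that $w$ is linearly independent from the $v^{(x,y)}$, so this submatrix has rank at most $n-4$. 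As a direct consequence, every $(n-3)\times(n-3)$ minor $\det\Phi^{\mathcal{R}}_{\mathcal{C}}$ with $\mathcal{R}\supseteq\{i,j\}$ vanishes in the fraction field of $\overline{Q}_n$.

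Finally, choose $\mathcal{R}=\{i,j,a\}$ and any $\mathcal{C}$ of size three that is disjoint from $\{i,j\}$. After cancelling the single $\ang{ij}$-factor appearing in $(\mathcal{R})$ against its counterpart in $\prod_{1\leq k<l\leq n}\ang{kl}$, Hodges's formula \eqref{eq:Hodges} rewrites as
\[
N_n \cdot \ang{ja}\ang{ia}\,(\mathcal{C}) \;=\; \pm\,\det\Phi^{\mathcal{R}}_{\mathcal{C}} \cdot \prod_{(k,l)\neq(i,j)}\ang{kl}
\]
as an identity in $Q_n$. In $\overline{Q}_n$ the right-hand side vanishes by the previous paragraph, so $N_n\cdot\ang{ja}\ang{ia}(\mathcal{C}) \equiv 0$. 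Provided $(\ang{ij},\sq{ij})+I_n$ is a prime ideal -- equivalently, the ``collinear'' subvariety of $\operatorname{SH}(2,n,0)$ is irreducible -- the non-vanishing factor $\ang{ja}\ang{ia}(\mathcal{C})$ can be cancelled, yielding $[N_n]=0$ in $\overline{Q}_n$ as desired. Verifying this primality, or equivalently constructing explicit polynomial witnesses $r_1, r_2 \in R_n$ that realise the cancellation inside $R_n$, is the main technical obstacle.
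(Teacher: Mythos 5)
Your approach is genuinely different from the paper's and worth comparing. The paper's proof is a direct combinatorial argument entirely inside $R_n$: with $\mathcal{R}=\mathcal{C}$ chosen disjoint from $\{i,j\}$ so that all four entries $\Phi_{ii},\Phi_{ij},\Phi_{ji},\Phi_{jj}$ survive, it observes that $\ang{ij}$ and $\sq{ij}$ appear only in those entries and always in the combination $\sq{ij}/\ang{ij}$, expands the Leibniz determinant, notes each permutation contributes a factor $(\sq{ij}/\ang{ij})^k$ with $k\in\{0,1,2\}$, and concludes that clearing the $\ang{ij}^2$ denominator leaves each summand divisible by $\ang{ij}^{2-k}\sq{ij}^k$, i.e.\ by $\ang{ij}$ or $\sq{ij}$. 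This is elementary and produces explicit polynomial witnesses $r_1,r_2$. Your argument instead interprets the vanishing as a rank drop of Hodges's matrix on the collinear locus $\ang{ij}=\sq{ij}=0$: the extra kernel vector $w=e_j-(\widetilde{c}/c)e_i$ (with $c,\widetilde{c}$ the Pl\"ucker-induced proportionality constants) pushes the rank of the submatrix with rows $i,j$ deleted below $n-3$, killing the relevant minors. This is conceptually illuminating — it gives a geometric ``reason'' for the zero — and you cleverly choose $\mathcal{R}=\{i,j,a\}$ (rather than the paper's $\mathcal{R}=\mathcal{C}$ disjoint from $\{i,j\}$), which makes $\det\Phi^{\mathcal{R}}_{\mathcal{C}}$ independent of $\ang{ij},\sq{ij}$ altogether.

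The price is the gaps you partly acknowledge. First, the linear independence of $w$ from the three $v^{(x,y)}$ over the fraction field of $\overline{Q}_n$ is waved through as a ``quick genericity check''; it does need an argument since all these vectors live over a non-obvious quotient ring. Second, and more seriously, cancelling $\ang{ja}\ang{ia}(\mathcal{C})$ requires $(\ang{ij},\sq{ij})+I_n$ to be prime (or a radical-ideal/Nullstellensatz workaround), and even then you must separately know that $\ang{ja}\ang{ia}(\mathcal{C})$ does not vanish identically on the collinear stratum. The paper's proof sidesteps both issues by never leaving $R_n$. Third, your route, once completed, gives $N_n\in(\ang{ij},\sq{ij})+I_n$, i.e.\ $\pi_{I_n}(N_n)\in J_{ij}$, rather than the lemma's literal statement $N_n=\ang{ij}r_1+\sq{ij}r_2$ in $R_n$ for the specific polynomial representative; the weaker conclusion does suffice for the downstream use in Proposition~\ref{prop:N_n_basis_element}, but it is not the same statement and the discrepancy should be flagged rather than labelled ``equivalent''. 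In short: correct underlying idea and a nice alternative viewpoint, but with real technical debt (independence of $w$, primality of the collinear ideal) that the paper's determinant-expansion argument avoids.
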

\begin{proof}
    Fix $i,j,x$ and $y$ as above. Then, notice that $\sq{ij}$ and $\ang{ij}$ appear only in $\Phi_{ij}=\Phi_{ji}$ and in one of the summands of $\Phi_{ii}$ and $\Phi_{jj}$, more precisely they appear as $\frac{\sq{ij}\ang{xj}\ang{yj}}{\ang{ij}\ang{xi}\ang{yi}}$ and $\frac{\sq{ij}\ang{xi}\ang{yi}}{\ang{ij}\ang{xj}\ang{yj}}$, respectively.
    Notice also that $\ang{ij}$ and $\sq{ij}$ always appear together as $\frac{\sq{ij}}{\ang{ij}}$.
    Now, fix $\mathcal{R}=\mathcal{C}\subset\{1,\ldots,n\}$ of cardinality $3$ and such that $\{i,j\}\cap \mathcal{R}=\emptyset.$ 
    This will ensure $\Phi^{\mathcal{R}}_{\mathcal{C}}$ contains all entries $\Phi_{ii},\Phi_{ij},\Phi_{ji},$ and $\Phi_{jj}$. Relabeling the row and column indices $\{1,\ldots,n\}\setminus \mathcal{R}$ and $\{1,\ldots,n\}\setminus \mathcal{C}$ by $1,\dots,n{-}3$ allows us to write  $\text{det}(\Phi^{\mathcal{R}}_{\mathcal{C}})$ as:
    \[
    \det \Phi^{\mathcal{R}}_{\mathcal{C}} = \sum_{\sigma\in S_{n-3}} \operatorname{sgn}(\sigma)(\Phi^{\mathcal{R}}_{\mathcal{C}})_{1,\sigma(1)}\cdots (\Phi^{\mathcal{R}}_{\mathcal{C}})_{n-3,\sigma(n-3)},
    \]
    where $S_{n-3}$ denotes the symmetric group on $n{-}3$ elements, and $\operatorname{sgn}(\sigma)$ its sign. By our choice of $\{i,j\}\cap \mathcal{R}=\emptyset$ and the observations above, each summand admits a factor $\frac{\sq{ij}^k}{\ang{ij}^k},$ for some $k\in \{0,1,2\}.$ The permutation $\sigma_{i'j'}$, which interchanges the re-labeled $i'$-th and $j'$-th columns containing $\Phi_{ij}$ and $\Phi_{ji}$, will induce a summand of the form $$(-1)\cdot \frac{\sq{ij}^2}{\ang{ij}^2}\cdot \prod_{\ell\in [n]\setminus \mathcal{R}\cup\{i',j'\}}(\Phi^{\mathcal{R}}_{\mathcal{C}})_{\ell,\sigma(\ell)}.$$Hence, the highest power of $\frac{\sq{ij}}{\ang{ij}}$ appearing in $\det(\Phi^{\mathcal{R}}_{\mathcal{C}})$ is two. The only other summand which is quadratic in $\frac{\sq{ij}}{\ang{ij}}$ contains additional factors in the reference spinors, hence the term above cannot be eliminated by any other summands.
    Clearing denominators of  $\det(\Phi^{\mathcal{R}}_{\mathcal{C}})$ will scale each summand by $\ang{ij}^2,$ where $$\ang{ij}^2\cdot \left(\frac{\sq{ij}}{\ang{ij}}\right)^k=\ang{ij}^{2-k}\cdot \sq{ij}^k.$$ Hence, what remains in the numerator is a sum of products each divisible by $\ang{ij}$ or $\sq{ij}$, since either $2-k$ or $k$ for $k\in\{0,1,2\}$ is strictly positive.
\end{proof}

This allows us to prove the main Proposition of this section.

\begin{proof}[Proof of Proposition \ref{prop:N_n_basis_element}]
    This proof is essentially a reformulation of Hodges proof sketch in \cite{Hodges:2012ym}.
    First, note that $d(n)=\deg(N_n)$. Then, using Lemma \ref{lemma:zeros_hodge}, we know that, for a suitable choice of $\mathcal{R}$, $\mathcal{C}$ and the reference spinors $x$ and $y$, the MHV numerator $N_n\in R_n$ can be written as $N_n=\ang{ij}r_1+\sq{ij}r_2$ for all $i<j\in\{1,\ldots,n\}$, thus $\pi_{I_n}(N_n)\in J_{ij}$. By Lemma \ref{lem:ref_spinors}, it remains to show that $N_n$ gives a well defined equivalence class in $Q_n$, independent of the choice of $\mathcal{R}$ and $\mathcal{C}$.
    To that end, we first construct another $n\times n$ matrix $\Psi$ from $\Phi$, by multiplying the $i$-th row with $\ang{1i}\ang{2i}$ for all $1\leq i\leq n$, that is $\Psi_{ij}=\ang{1i}\ang{2i}\Phi_{ij}$. By construction, the column sums of $\Psi$ are zero, as $-\sum_{j\neq i}\Psi_{ij}=\Psi_{ii}$. We also note that the first two rows are zero since $\ang{11}=\ang{22}=0.$
    Next, we want to show that $\det \Psi^{\{1,2,3\}}_\mathcal{C}=-\det \Psi^{\{1,2,4\}}_\mathcal{C}$. To this end, it suffices to show $\det \overline{\Psi}^{\{12\}}_{\mathcal{C}}=0,$ where $\overline{\Psi}$ is the matrix obtained from $\Psi$ by summing the third and fourth rows. But this follows from the fact that 
    \[
    0=\sum_{i=1}^n\Psi_{ij}=\sum_{i=3}^n\Psi_{ij}=\sum_{i=3}^n\overline{\Psi}_{ij},
    \]
    for all $1\leq j\leq n.$ 
    In turn, this implies $\ang{14}\ang{24}\det \Phi^{\{1,2,3\}}_\mathcal{C}=-\ang{13}\ang{23}\Phi^{\{1,2,4\}}_\mathcal{C}$. The proof for any other choice of $\mathcal{R}$ and $\mathcal{C}$, where only one label was swapped is analogous. By stringing multiple of these transpositions of labels together, we have just shown that the numerator $N_n$ is independent of the choice of $\mathcal{R}$ and $\mathcal{C}$. Therefore, we get a well defined element $N_n\in J_{d(n)}\subset Q_n$.   
\end{proof}

\begin{exa}
    Let $n=5$. Then, $I_5$ cuts out the irreducible variety $\operatorname{SH}(2,5,0)$ in $\gr(2,5)\times\gr(2,5)$ with their Pl\"ucker embedding $\mathbb{P}^9\times \mathbb{P}^9$.
    In turn, the intersection with $J$ cuts out a reducible variety of total degree 420 and dimension 8.
    Its irreducible components are precisely given by the vanishing loci of the ideals $J_{ij}\subset R_5/I_5$.
    Moreover, we have $\dim_{\C}(J_{d(5)})=1$ with the generator
    \[
    N_5=\ang{15}\ang{24}\sq{14}\sq{25}-\ang{14}\ang{25}\sq{15}\sq{24}.
    \]
    Notice that this element is the one from Example \ref{ex:n=5_numerator} and it is anti-symmetric upon swapping the labels, a feature that emerges naturally from our description as the intersection of ideals.  
    See Section \ref{sec:computational_results} for a computational derivation of the result.
\end{exa}

\begin{rem}\label{rem:SH_as_flag}
Later on we are going to make use of the parametrization of the spinor-helicity variety $\operatorname{SH}(2,n,0)$, which is also due to \cite{spinor-helcity} and the fact that $\operatorname{SH}(2,n,0)$ is isomorphic to the two step flag variety $\operatorname{Fl}(n,n{-}2;\C^n)$.
This can be seen as follows. First, by definition we have
\[
\operatorname{SH}(2, n, 0) = \{ (V, W) \in \gr(2, n) \times \gr(2, n) \mid \dim(V \cap W^\perp) \geq 2 \},
\]
where $W^\perp$ denotes the orthogonal complement of $W$ in $\C^n$ with respect to the standard inner product.
Then, by passing from $W$ to $W^\perp$ we can identify $\operatorname{SH}(2,n,0)$ with a subvariety in $\gr(2,n)\times\gr(n{-}2,n)$, with points $(V,W^\perp)$ such that $V\subset W^\perp$. 
Moreover, we can describe its Pl\"ucker coordinates as follows. Take an $(n{-}2)\times n$ matrix $X$ with entries given by formal variables $x_{ij}$, that is $X = (x_{ij})$. For, $\{i_1,i_2\}\subset\{1,\ldots, n\}$ the $\ang{i_1i_2}$ Pl\"ucker coordinate of $V$ is the minor of $X$ given by the determinant of the first two rows and the column labeled by $i_1$ and $i_2$.
Similarly, the $\sq{j_1\ldots j_{n{-}2}}$ Pl\"ucker of $W^\perp$ is the determinant of the submatrix of $X$ given by taking the first $n{-}2$ rows and the columns labeled by $\{1,\ldots,n\}\setminus \{j_1\ldots j_{n{-}2}\}$ and multiplying it with $({-}1)^{j_1+\ldots +j_{n{-}2}}$.
Then, we can use \cite[Lemma 3.3]{W_perp_plueckers} (see also \cite{hilbert}) to express this in terms of Pl\"uckers of $W$. That is, if $\{s,t\}$ is the complement of $\{j_1,\ldots, j_{n{-}2}\}$ then 
\[
\sq{st}(W)=({-}1)^{j_1+\ldots +j_{n{-}2}}\sq{j_1\ldots j_{n{-}2}}(W^\perp).
\]
\end{rem}

\section{Special kinematics}\label{sec:steps_proof}

In this section, we outline the steps which we believe will yield a proof of Conjecture \ref{conj:numerator_unique}.
These steps heavily rely on intuition derived from the physical interpretation of $N_n$ as the numerator of the MHV gravity amplitudes and the corresponding factorization properties. We proceed by induction on $n\geq 5,$ where the recursion is established by restricting to special kinematics, which will be defined later in this section. The base cases, $n=5$ and $n=6$ are derived computationally in Section \ref{sec:computational_results}.

We start by endowing $R_n$ with a slightly different $\mathbb{Z}^{2+(n{-}1)}$ grading by defining a degree map 
\[
w':\Z^{2\binom{n}{2}}_{\geq 0}\rightarrow \Z^{2+(n{-}1)},
\]
with $w'(\ang{ij})=w(\ang{ij})$, as in Equation \eqref{eq:LGW}, where, again, we identify the exponent vector $u$ with its corresponding element in $R_n$ for $i,j\in\{1,\ldots, n\}$. And also $w'(\ang{in})=e_{\ang{}}+e_i+e_{n{-}1}$ and $w'(\ang{n{-}1n})=e_{\ang{}}+2e_{n{-}1}$; analogously for the square spinors. Therefore, the grading now counts each appearance of the label $n$ toward the weight in label $n{-}1.$ Notice that this grading is a coarsening of the one introduced in Section \ref{sec:math}. 
To account for this, we define the slightly modified target multidegree $d'(n)$ given by $d'(n)_i=d(n)_i$ for $i\leq n$ and $d'(n)_{n{+}1}=2d(n)_{n{+}2}$. Thus, $N_n\in J_{d'(n)}$. Next, we introduce the notion needed for our induction argument. Denote by $K_{n-1,n}\subset R_n/I_n$ the \emph{special kinematics ideal} at $n$-points given by 
\[
K_{n-1,n}=\left(\ang{n{-}1n}\right)+\sum_{i=1}^{n-1}\left(\ang{in{-}1}-\alpha_{n-1,n}\ang{in}\right),
\]
where $\alpha_{n-1,n}\in \C$ is a non-zero constant. In Physics this correspond to the hard kinematic limit, setting $|n{-}1\rangle=\alpha_{n-1,n}|n\rangle$. Next, define the \emph{special kinematics ring} $Q^K_n=R_n/(I_n+K_{n-1,n})=\pi_K(Q_n)$, where $\pi_K$ is the canonical projection from $Q_n\rightarrow Q_N/K_{n-1,n}$. $Q_n^K$ retains the new $\mathbb{Z}^{2+(n{-}1)}$ grading as $I_n$ and $K_{n-1,n}$ are homogeneous with respect to that grading on $R_n$. Projecting down to the special kinematics ring will be the recursion step in our induction.
Denote by $J^K_{ij}$ the image of the ideal $J_{ij}\subseteq Q_n$ in $Q^K_n$ under $\pi_K$. We then obtain
\begin{align}\label{eq:J_ij_images}
    J^K_{ij}=\begin{cases}
    (\ang{ij},\sq{ij}) & \text{ if }1\leq i<j\leq n-2,\\
    \left(\ang{in{-}1},\sq{in{-}1}\right) & \text{ if }1\leq i\leq n-2,\; j=n-1,\\
    \left(\ang{in{-}1},\sq{in}\right) & \text{ if }1\leq i \leq n-2,\; j=n,\\
    \left(\sq{n{-}1n}\right)& \text{ if }i=n-1,\; j=n.
\end{cases}
\end{align}
We also let $J_n^K:=\bigcap_{1\leq i<j\leq n} J_{ij}^K\subset Q_n^K$, and similarly $J_n=\bigcap_{1\leq i<j\leq n}J_{ij}$ in $Q_n$. With the setup established, we can now state a series of conjectures which together may furnish a proof of Conjecture \ref{conj:numerator_unique}.
We start with an assumption on how the canonical projection $\pi_K$ behaves with respect to taking the intersections of $J_{ij}$.

\begin{conj}\label{conj:can_proj_commutes}
    The canonical projection map $\pi_K$ commutes with taking intersections of the ideals $J_{ij}$ for all $i<j\leq n$. More precisely, 
    \[
    \pi_K(J_n) =\left(\bigcap_{i<j\leq n{-}2}J_{ij}^K\right)\cdot\prod_{1\leq i\leq n{-}2}J^K_{in}\cap J^K_{in{-}1}
    \]
\end{conj}
This result is desirable because it is generally easier to describe elements in a product of ideals rather than in an intersection. 
In fact, if Conjecture \ref{conj:can_proj_commutes} holds, each element in $\pi_K(J_n)$ admits a factor given element in $J^K_{in-1}\cap J^K_{in}.$ This, in turn, allows us to find a designated factor of any polynomial in $f\in\pi_K(J_n)$, as the following Proposition asserts for $5\leq n\leq 7$.

\begin{prop}\label{prop:prefactor}
    Let $5\leq n\leq 7$ and $J_n$ as above. Then, for any element $\widetilde{f}$ in $(J_n)_{d'(n)}$ we have $\pi_K(\widetilde{f})=f$ with
    \[
    f \in (P_n)\cdot \left(\bigcap_{i<j\leq n{-}2}J^K_{i,j}\right)\subset Q_n^K
    \]
    where $P_n=\sq{n{-}1n}\cdot\prod_{1\leq i\leq n-2}\ang{in}\in Q_n^K$.
\end{prop}
\begin{proof} 
    We start with the case $n=7$.
    Set $G=\{g_1,\ldots, g_s\}$ to be a generating set for the ideal 
    \[
    \prod_{1\leq i<6} J_{i7}^K\cap J_{i6}^K.
    \]
    By Conjecture \ref{conj:can_proj_commutes} we then know that any $f\in J_7^K$ can be written as $$f=\left(\sum_{g_i\in G}g_if_i\right)\cdot h$$ for some $f_i\in Q_7^K$ and $h\in \bigcap_{i<j\leq5}J_{ij}^K$. It is clear that every generator $g_k$ is of the form
    \begin{align*}
        g_1=\prod_{1\leq i<6}\ang{i7}, \; \;g_k=\prod_{i\in I}\ang{i7}\prod_{j\in J}\sq{j7}\sq{j6}
    \end{align*}
    for some suitable subsets $I$ and $\emptyset\neq J$ of $[5]$.
    Now consider a summand $g_kf_k\cdot h$.
    It is of homogeneous degree $(11,4,2,2,2,2,2,4)$, which implies that $f_k\cdot h$ must be of degree $(11-|I|,4-2|J|,2,2,2,2,2,4)+e_I-e_J$, where $e_I:= \sum_{i\in I}e_i$ and similarly $e_J:= \sum_{j\in J}e_j.$ 
    Suppose, without loss of generality, that $J=\{1\}$ and $f_k\cdot h\neq 0.$
    Then, $f_k\cdot h$ must be of degree $(7,2,4,2,2,2,2,4)$, which implies that $f_k\cdot h= \ang{17}\cdot r$, for $r\in Q_n^K,$ as there are only $7$ angled spinors on which to distribute the $8$ labels in $1$ and $6$.
    The reasoning for the other cases is analogous. It follows that $$f=\sq{67}\cdot\prod_{1\leq i<6}\ang{i7}\cdot h,$$as claimed.
    The cases for $n=5$ and $n=6$ are analogous, and additionally, with the method described in Section \ref{sec:computational_results}, these can be verified independently.
\end{proof}

We conjecture that such a result holds in general for $n\geq 8.$
\begin{conj}\label{conj:prefactor}
    Let $n\geq 5$, and $J_n\subset Q_n$ as above.
    Then, for any element $\widetilde{f}$ in $(J_n)_{d'(n)}$ we have $\pi_K(\widetilde{f})=f$ with
    \[
    f \in (P_n)\cdot \left(\bigcap_{1\leq i<j\leq n{-}2}J^K_{i,j}\right)\subset Q_n^K,
    \]
      where $P_n=\sq{n{-}1n}\cdot\prod_{1\leq i\leq n-2}\ang{in}\subset Q_n^K$.
\end{conj}

In the recursion step of the induction, there is a need to relate the intersection of ideals in $Q_n$ to $Q_{n{-}1}$. This is established as follows.
\begin{conj}\label{conj:induced_iso}
    There is an induced isomorphism
    \[
    \left(\bigcap_{1\leq i<j\leq n{-}1}J_{ij}\right)_{d(n{-}1)}\cong
    \left(\bigcap_{1\leq i<j\leq n{-}2}J_{ij}^K\right)_{d'(n{-}1)},
    \]
    between vector spaces in $Q_{n{-}1}$ and the special kinematics ring $Q_{n}^K$ respectively.
    It sends $\ang{ij}\mapsto\ang{ij}$ for all $i<j<n$, similarly $\sq{ij}\mapsto\sq{ij}$ for all $i<j\leq n{-}1$  and $\sq{in{-}1}\mapsto \sq{in}+\sq{in{-}1}$.
\end{conj}

As the final ingredient, we need that the lift from the special kinematics ring $Q_n^K$ into $Q_n$ is unique.
\begin{conj}\label{conj:lift_unique}
    The canonical projection $\pi_K:Q_n\rightarrow Q_n^K$ induces a linear isomorphism $\pi_K^*:(J_n)_{d'(n)}\rightarrow (J^K_n)_{d'(n)}$.
\end{conj}

Then, under the assumption of Conjectures 3.1-5
, we can prove Conjecture \ref{conj:numerator_unique} as follows.

\begin{proof}[Proof of Conjecture \ref{conj:numerator_unique}]
By Conjecture \ref{conj:lift_unique} it suffices to show that $\widetilde{N}_n\in Q_n^K$ is unique.
We proceed by induction on $n$.
Let $n=5$.
Then, the claim follows by computation in \texttt{Macaulay2}, see Section \ref{sec:computational_results}.
Next, let $n\geq 5$ be arbitrary but fixed. That is, suppose that $N_{n-1}$ is the unique basis element of $\left(J_{n{-}1}\right)_{d(n{-}1)}\subset Q_{n{-}1}$.
By Conjecture \ref{conj:can_proj_commutes}
any element $\widetilde{f}\in J_n$ with $f=\pi_K(\widetilde{f})$ can be written as $f=p\cdot g,$ where $f\in \prod_{1\leq i\leq n{-}2}J^K_{in}\cap J^K_{in{-}1}$ and $g\in\bigcap_{1\leq i<j\leq n{-}2}J^K_{i,j}$.
If $\widetilde{f}$ is additionally of homogeneous degree $d'(n)$, then by Conjecture \ref{conj:prefactor} we actually have $f=P_n\cdot g$.
Notice, that $w'(f)=d'(n)$ since $\pi_K$ preserves the grading. Moreover, as $P_n$ is of homogeneous degree
\[
w'(P_n)=(n{-}2)e_{\ang{}}+e_{\sq{}}+e_1+\ldots+e_{n{-}2},
\]
we must have $w'(g)=d'(n)-w'(P_n)=d(n{-}1)$. 
Therefore, using the isomorphism of Conjecture \ref{conj:induced_iso} we have $g=N_{n{-}1}$, by the induction assumption. That is to say, $f=P_n\cdot N_{n{-}1}'$ where $N_{n{-}1}'$ denotes the image of $N_{n{-}1}$ under that isomorphism.
Therefore, $f$ is in particular unique and lifts uniquely to $Q_n$ by Conjecture \ref{conj:lift_unique}.
\end{proof}

\begin{exa}
We want to demonstrate explicitly what the proposed steps of the proof look like for $n=6$. Consider the 6-point numerator given by Hodges formula, Equation \eqref{eq:Hodges}, for the choices $\mathcal{R}=\{1,2,5\}$ and $\mathcal{C}=\{3,4,6\}$. Let $S_k(i_1i_2\dots i_k) $ denote the group of permutations of the labels $i_1,i_2,\dots,i_k$ and $|\sigma|$ the sign of the permutation $\sigma\in S_k(i_1i_2\dots i_k)$. We compute:
   \begin{align*}
   N_6
   &=\sum_{\sigma\in S_3(125)} \operatorname{sgn}(\sigma)\sq{13}\sq{24}\sq{56}\ang{23}\ang{35}\ang{14}\ang{45}\ang{16}\ang{26}\\
   &=
   \sq{56}\cdot \sum_{\sigma \in S_2(12)}\operatorname{sgn}(\sigma)\sq{13}\sq{24}\ang{23}\ang{14}\ang{16}\ang{26}\ang{35}\ang{45}\\
   &\quad+\ang{56}\cdot \ang{26}\cdot \sum_{\sigma \in S_2(25)}\operatorname{sgn}(\sigma)\sq{16}\sq{23}\sq{45}\ang{13}\ang{14}\ang{35}\ang{24}\\
   &\quad-\ang{56}\ang{16}\cdot \sum_{\sigma \in S_2(15)}\operatorname{sgn}(\sigma)\sq{26}\sq{13}\sq{45}\ang{23}\ang{35}\ang{14}\ang{24}.
   \end{align*}
   
   Then, on special kinematics, that is taking the image of $N_6$ under the projection map $\pi_K$ we obtain:
   \[\pi_K(N_6)= \sq{56}\ang{16}\ang{26}\ang{35}\ang{45}\big(
   \sq{13}\sq{24}\ang{23}\ang{14}-\sq{14}\sq{23}\ang{13}\ang{24}\big).
   \]
   Firstly, notice that the factor $P_6=\sq{56}\cdot \prod_{1\leq i\leq 4}\ang{i6}$ from Conjecture \ref{conj:prefactor} appears here. We also see that the term in the parenthesis $N_5'$ is the image of $N_5$ under $R_{5}\hookrightarrow R_6\xrightarrow{\pi_{I_6}}Q_6\xrightarrow{\pi_K}Q_6^K$, where, in particular, we use the proposed isomorphism of Conjecture \ref{conj:induced_iso}.
   Notice that we have 
   \begin{align*}
   N_5' = &\sq{13}\sq{24}\ang{23}\ang{14}-\sq{14}\sq{23}\ang{13}\ang{24}\\
   =& \ang{23}\ang{46}\sq{24}\sq{35}-\ang{24}\ang{36}\sq{23}\sq{45}\\
    &+\ang{23}\ang{46}\sq{24}\sq{36}-\ang{24}\ang{36}\sq{23}\sq{46}\\
    =&\ang{23}\ang{46}\sq{24}\left(\sq{35}+\sq{36}\right)-\ang{24}\ang{36}\sq{23}\left(\sq{45}+\sq{46}\right)
   \end{align*}
   in $Q_6^K$ which is clearly contained in the ideal $(\ang{46},\sq{45}+\sq{46})\subset Q_6^K$. This in agreement with the physical phenomenon that, for $|n\rangle=\alpha|n{-}1\rangle$, the image $\widetilde{N_n}$ of $N_n$ under the canonical projection $\pi_K$ exhibits new zeros, corresponding to internal momenta. Put algebraically, we have
    \[ \pi_K(N_6) \in\bigcap_{i\leq 4}\big(\ang{i5},\sq{i6}+\sq{i5}\big)\subset Q_6^K,
   \]
    as expected.
\end{exa}

\section{Computational results}\label{sec:computational_results}

We implement a procedure to compute $(J_n)_{d(n)}$ in \texttt{Macaulay2}, \cite{M2}, for $n=5,6$ in order to deduce Conjecture \ref{conj:numerator_unique} in both cases. 
Let $<$ be the graded reverse lexicographic order on $R_n$, and recall that $I_n$ denotes the spinor-helicity ideal. We compute the standard monomial basis $B_n$ of $(R_n/I_n)_{d(n)}=(Q_n)_{d(n)}$ with respect to $<$ and obtain 
\[
    |B_n|=\begin{cases}
    16 & n=5,\\
    780 & n=6.
    \end{cases}
\]
Next, we take a generic linear combination $$g=\sum_{b\in B_n}c_b b$$in the polynomial ring $\mathbb{C}[\{c_b\mid b\in B_n\}]$ generated by formal variables $c_b$. Denote $\pi_{J_{ij}}:Q_n\rightarrow Q_n/J_{ij}$ the quotient map $g\mapsto g\mod J_{ij}.$ 
Then, $\pi_{J_{ij}}(g)=0$ gives linear conditions in the $c_b$.
Collecting these conditions and organising them into the coefficient matrix with respect to the ordered basis $(c_1,\ldots,c_{|B_n|})$ results in a $(20\times 16)$ matrix $X_5$ and a $(2951\times 780)$ matrix $X_6$, both with entries in $\Z$, for $n=5$ and $n=6$ respectively. Direct computations yield $$\text{dim}(\ker(X_5))=\text{dim}(\ker(X_6))=1.$$ In fact, $\ker(X_5)=N_5$ and $\ker(X_6)=N_6,$ which is consistent with Lemma \ref{lemma:zeros_hodge}, establishing Conjecture \ref{conj:numerator_unique} in both cases. \\

Unfortunately, both generic Gr\"obner basis methods as well as the procedure described above fail for higher $n$.
This is mainly because the combinatorics of the quotient ring $Q_n$ become completely intractable. 
Using the \textit{Mandelstam invariants} $s_{ij}=\langle ij\rangle [ij]$, we were able to find a combinatorial description of bases for $(Q_5)_{d(5)}$ and $(Q_6)_{d(6)},$ but already at $n=7$ such a description becomes infeasible;  a basis for $(Q_7)_{d(7)}$ admits around $10^7$ elements.
In combination with the theoretical framework laid out in Section \ref{sec:math}--which mostly reduces to linear algebra--these impediments starkly highlight the necessity of better tools to deal with such complexity.

\bigskip
\medskip

\noindent {\bf Acknowledgements.} We are grateful to Bernd Sturmfels, Callum Jones and Ben Hollering for the help with the computations. 
J.T., U.O. and S.P. were supported by U.S. Department of Energy grant DE-SC0009999 and funds from the University of California. S.P. is supported by Simons Investigator Award \#376208. J.K. is supported by the European Union (ERC, UNIVERSE PLUS, 101118787). Views and opinions expressed are however those of the authors only and do not
necessarily reflect those of the European Union or the European Research Council
Executive Agency. Neither the European Union nor the granting authority can
be held responsible for them.

\bibliography{References}
\end{document}